\documentclass[11pt]{amsart}
\usepackage[foot]{amsaddr}
\usepackage{setspace}
\usepackage{bbm}
\usepackage{apxproof}
\usepackage{mystyle}
\usepackage{proof}
\usepackage[normalem]{ulem}
\usepackage{amsmath}
\usepackage{amsfonts}
\usepackage{latexsym}
\usepackage[all]{xy}  
\usepackage{tikz}
\usepackage{xcolor}
\usepackage{xfrac}
\usepackage{mathtools}
\usepackage{microtype}
\usepackage{amssymb}
\usepackage{mathrsfs}
\usepackage{stmaryrd}
\usepackage[shortlabels]{enumitem}
\usepackage{booktabs}
\usepackage{float}  
\usepackage{natbib}
\usepackage{hyperref}
\usepackage{scalerel}
\usepackage{orcidlink}
\usepackage[toc,title]{appendix}

\usepackage[nameinlink,capitalise]{cleveref}

\setcitestyle{round}

\definecolor{linkcolor}{RGB}{87,41,40} 

\hypersetup{
    colorlinks=true,
    linkcolor=linkcolor,
    citecolor=linkcolor,   
    urlcolor=linkcolor
}
 
\makeatletter
\newcommand{\nonproof}[1]{\relax}
\newcommand{\vast}{\bBigg@{3}}
\newcommand{\Vast}{\bBigg@{4}}
\newcommand{\VVast}{\bBigg@{5}}

\makeatother
 
\DeclareMathOperator{\unarymost}{\boldsymbol{\mathsf{M}}}

\DeclareMathOperator{\unaryforall}{\boldsymbol{\mathsf{A}}}

\definecolor{defcolor}{RGB}{14,45,97}

\theoremstyle{plain}
\newtheorem{thm}{Theorem}[section]
\newtheorem{lem}[thm]{Lemma}

\newtheorem{cor}[thm]{Corollary}

\usetikzlibrary{calc}

\newtheorem*{prop*}{Proposition}
\newtheorem*{cor*}{Corollary}
\newtheorem*{subclaim*}{Claim}

\theoremstyle{definition}
\newtheorem{defn}[thm]{Definition}

\newtheorem*{rmk*}{Remark}
\newtheorem*{defn*}{Definition}
\newtheorem*{exa*}{Example}
\newtheorem*{fct*}{Fact} 

\theoremstyle{remark}

\DeclareMathSymbol{\Nset}{\mathbin}{AMSb}{"4E}
\DeclareMathSymbol{\Zset}{\mathbin}{AMSb}{"5A}
\DeclareMathSymbol{\Rset}{\mathbin}{AMSb}{"52}
\DeclareMathSymbol{\Qset}{\mathbin}{AMSb}{"51}
\DeclareMathSymbol{\Fset}{\mathbin}{AMSb}{"46}
\DeclareMathSymbol{\Cset}{\mathbin}{AMSb}{"43}
\DeclareMathSymbol{\Kset}{\mathbin}{AMSb}{"4B}

\DeclareMathSymbol{\Sset}{\mathbin}{AMSb}{"53}
\newcommand\ind[1]{\ensuremath{\mathbf{1}_{#1}}}

\DeclareFontFamily{U}{matha}{\hyphenchar\font45}
\DeclareFontShape{U}{matha}{m}{n}{ <-6> matha5 <6-7> matha6 <7-8>
matha7 <8-9> matha8 <9-10> matha9 <10-12> matha10 <12-> matha12 }{}
\DeclareSymbolFont{matha}{U}{matha}{m}{n}

\DeclareFontFamily{U}{mathx}{\hyphenchar\font45}
\DeclareFontShape{U}{mathx}{m}{n}{ <-6> mathx5 <6-7> mathx6 <7-8>
mathx7 <8-9> mathx8 <9-10> mathx9 <10-12> mathx10 <12-> mathx12 }{}
\DeclareSymbolFont{mathx}{U}{mathx}{m}{n}

\DeclareMathDelimiter{\ldbrack} {4}{matha}{"76}{mathx}{"30}
\DeclareMathDelimiter{\rdbrack} {5}{matha}{"77}{mathx}{"38}

\DeclareSymbolFont{CMsymbols}{OMS}{cmsy}{m}{n}
\SetSymbolFont{CMsymbols}{bold}{OMS}{cmsy}{b}{n}
\DeclareMathSymbol{\sim}{\mathrel}{CMsymbols}{"18}

\newlength{\mathfrwidth}
\newsavebox{\mathfrbox}

\usetikzlibrary{arrows,shapes,automata,backgrounds,petri} 
\makeindex

\newcommand{\cal}{\mathcal}

\newcommand{\term}[1]{\textbf{#1}}

\newcommand\xqed[1]{%
  \leavevmode\unskip\penalty9999 \hbox{}\nobreak\hfill
  \quad\hbox{{#1}}}
\newcommand\qeddef{\xqed{${\scriptstyle\spadesuit}$}}
\newcommand\qedthm{\xqed{${\scriptstyle\blacksquare}$}}

\newcommand{\dm}{\ensuremath{}}

\definecolor{dfcond}{RGB}{14,45,97}

\newlength{\depthofsumsign}
\newcommand{\pt}[1]{\textup{({\textcolor[RGB]{14,45,97}{\textsc{$\dm$}#1}})}}

\makeatletter
\renewcommand{\email}[2][]{%
  \ifx\emails\@empty\relax\else{\g@addto@macro\emails{,\space}}\fi%
  \@ifnotempty{#1}{\g@addto@macro\emails{\textrm{(#1)}\space}}%
  \g@addto@macro\emails{#2}%
}
\makeatother

\begin{document}

    \title[Measurable Majorities Are Not Finitely Axiomatizable]{Measurable Majorities Are Not Finitely Axiomatizable}

\author[L.S. Moss]{\orcidlinki{Lawrence S. Moss}{0000-0002-9908-5774}$^\dagger$}
\address[$\dagger$]{Dept.~of Mathematics, Indiana University, Bloomington.}
\email[$\dagger$]{lmoss@iu.edu}

\author[A.P. Pedersen]{\orcidlinki{Arthur Paul Pedersen}{0000-0002-2164-6404}$^{\ddagger\,\ast}$}
\address[$\ddagger$]{Dept.~of Computer Science \& the Intel Investigations Lab, the City College of New York; \break the Graduate Center \& Remote Sensing Earth Systems Institute, the City University of New York.}
\email[$\ddagger$]{appedersen@ccny.cuny.edu}

\maketitle

\begin{abstract}
This theoretical note studies the finite axiomatizability of strict majority reasoning in finite social decision frames. \citet{MossPedersen2026} introduce a coherence criterion that characterizes 
exactly when qualitative majority judgments are representable by a finitely additive measure. The question addressed here is whether that coherence criterion can be replaced, in the finite setting, by any bounded finite fragment. We prove that it cannot. For every $k\ge 1$, we construct a maximal standard frame whose shortest coherence violation has length exactly $2k+2$. Hence there is no uniform finite bound on the incoherence index of social decision frames, resolving Conjecture~5.7 from \citet{MossPedersen2026}. The construction is geometric, in the sense that it proceeds via orthogonality and dimension in rational vector spaces, and self-contained: it isolates a symmetric family of half-sized voting blocs and extends it to a maximal frame in which every shorter balanced obstruction is excluded. Along the explicit infinite sequence of universe sizes obtained in the construction, this also establishes the middle-layer family predicted by Conjecture~B.25 from \citet{MossPedersen2026}. Together with the soundness and completeness theorem for the Moss-Pedersen minimal logic for strict majorities, this establishes that measurable social decision frames are not finitely axiomatizable in that language.
\end{abstract}

\section{Introduction}

In the study of strict majority reasoning within finite electorates, qualitative majority judgments cannot always be represented by a finitely additive probability measure. When such a representation fails to exist, the corresponding social decision frame is  \emph{incoherent}. The minimal complexity of this incoherence --- the length of the shortest sequence of voting blocs required to expose a structural contradiction --- is measured by its \emph{index}.

The study of structural bounds on qualitative probability traces back to the Kraft-Pratt-Seidenberg cancellation conditions \citep{KPS}. Within that framework,  \citet{Fishburn:1996} has investigated the function $f(n)$, which measures the minimal length of cancellation conditions required to guarantee representability for an $n$-element state space, proving that $n-1 \le f(n) \le n+1$ for $n \ge 5$. The incoherence index established in this paper operates as the majoritarian analogue to Fishburn's $f(n)$, extending the analysis of representation bounds from full comparative probability to strict majorities.

This places the problem in the broader tradition of representational
measurement theory. From that standpoint, the central question is not merely
whether qualitative judgments can be assigned numbers, but which structural
conditions make such numerical representation legitimate. Classical
measurement theory studies this question by formulating axioms on qualitative
structures and proving representation theorems that connect those structures to
numerical scales \citep{KrantzLuceSuppesTversky1971,LuceKrantzSuppesTversky1990}.
Scott's linear-inequality approach, the Kraft-Pratt-Seidenberg cancellation
conditions, and later work on qualitative probability all show that
representability can depend on finite configurations whose complexity is not
visible from the surface grammar of the judgments
\citep{Scott1964,KPS,Fishburn:1996,Narens1980}. The present note identifies the corresponding phenomenon for strict majorities:
representability is determined by a coherence scheme whose instances are finite
but whose full force is not finitely exhaustible.

To establish that this index is unbounded (Conjectures 5.7 and B.25 in \citet{MossPedersen2026}), one must construct families of subsets where the shortest logical contradiction requires an arbitrarily large sequence of sets. A combinatorial resolution to this conjecture was recently provided by \citet{Blanco2026}. That proof operates by mapping the winning and losing coalitions of trade-robust simple games into a self-dual selector, leveraging a theorem by \citet{TZ95} on strongly rigid magic squares. Through a padding argument, Blanco establishes the existence of a frame with an index of exactly $2k+2$ for all sufficiently large integers $n \ge k(k+1)$, achieving a highly efficient quadratic bound on the necessary size of the electorate.

Our main contribution in this note is the development of an alternative, purely geometric proof of the unboundedness of the incoherence index. Rather than relying on block designs or external theorems from cooperative game theory, we map the properties of subset selection directly into the geometry of rational vector spaces. This approach allows us to reframe the search for incoherent sequences as an evaluation of linear dependencies within the Boolean hypercube. 

While the combinatorial proof of \citet{Blanco2026} achieves a quadratic scaling of the electorate size and captures every sufficiently large $n$, it relies essentially on Taylor and Zwicker's results on simple games. In contrast, our geometric proof is derived entirely from first principles. We define a highly symmetric base of subsets over a universe sized by central binomial coefficients, explicitly compute its linear span, and use a generic separating hyperplane to construct a maximal frame. Although this geometric construction yields an electorate that scales exponentially with respect to the index, it provides a transparent, self-contained mechanism governing why short balanced sequences are excluded outside a controlled core. In brief,  our results  show that representable qualitative majorities admit no finite structural axiomatization.

We interpret finite axiomatizability in the formal language introduced by
\citet{MossPedersen2026}. In that language, terms are built by Boolean
operations from atomic predicates, while the atomic sentences are of the forms $\unaryforall t$ and $\unarymost t$, expressing respectively that  $t$ is true of the whole universe of discourse and that $t$ is true of a majority. The proof system for that language contains an infinite coherence scheme, indexed by finite sequences of terms.
The construction in the present note shows that the infinitude of this
scheme is inescapable: no finite set of sentences in the Moss-Pedersen language for strict majorities axiomatizes exactly the measurable social decision frames.

\section{Frames, Coherence, and the Incoherence Index}

We study strict majority reasoning using social decision frames. For the remainder of this note, let $k \ge 1$ be a fixed integer. All mathematical constructs are parameterized by $k$ to explicitly track their dependencies.

Let $W_k$ be a finite universe of voters. We assume that the cardinality $|W_k|$ is an even number, and write $|W_k|=2n_k$. A social decision frame is a pair $\mathscr{M}_k = (W_k, \mathcal{M}_k)$, where the designated family $\mathcal{M}_k \subseteq \mathscr{P}(W_k)$ contains distinguished subsets interpreted as voting blocs that form a strict majority. Define:
\[
\mathcal{H}_k
\;\coloneqq\;
\Bigl\{\,A\in \tpow{W_k}: A\notin \mathcal{M}_k
\text{ and } W_k\setminus A\notin \mathcal{M}_k\,\Bigr\}.
\]
The family $\mathcal{H}_k$ corresponds to exact ties.

\begin{defn}[Maximal Standard Frames]\label[defn]{def:maximal_frame}

A frame $\mathscr{M}_k=(W_k,\mathcal M_k)$ is  said to be \term{standard} if
\[
\Bigl\{A\in \tpow{W_k}: |A|>n_k\Bigr\}\subseteq \mathcal M_k
\qquad\mbox{and}\qquad
A\in \mathcal M_k \Longrightarrow |A|\ge n_k .
\]
A standard frame is said to be \term{maximal} if for every $A \in\tpow{W_k}$ with
$|A|=n_k$, exactly one of $A$ and $W_k\setminus A$ belongs to
$\mathcal M_k$.\qeddef
\end{defn}

Thus, by resolving every possible complementary pair, a maximal frame ensures that $\mathcal{H}_k = \emptyset$.

\begin{defn}[Perfectly Balanced Sequences]\label[defn]{def:balanced}
A finite sequence of subsets $A_1, \dots, A_m \subseteq W_k$ is \textbf{perfectly balanced} if it uniformly covers the universe of voters exactly $m/2$ times. Algebraically, this is expressed using the standard binary indicator function $\ind{A}$ as
\[
    \sum_{i=1}^{m} \ind{A_i} \;=\; \frac{m}{2} \, \ind{W_k} \, ,
\]
where $\ind{W_k}$ is the vector of all ones over the universe.\qeddef
\end{defn}

The central property under investigation is \emph{coherence}, a structural condition which ensures that a frame behaves consistently with an underlying finitely additive measure.

\begin{defn}[Coherence and the Incoherence Index]\label[defn]{def:coherence}
A frame $\mathscr{M}_k = (W_k, \mathcal{M}_k)$ is \textbf{coherent} if:
\begin{itemize}[leftmargin=35pt,topsep=15pt,itemsep=10pt,labelsep=10pt]
\item[\pt{c}] For every positive integer $m$ and sequence of sets $A_1,\ldots, A_m \subseteq W_k:$
\item[] If \quad\quad  \pt{c1}\; $A_{i}\in \mathcal{M}_k\cup \mathcal{H}_k$ \quad for each $i=1,\ldots, m$, \qquad and \qquad \pt{c2}\quad  $\displaystyle\frac{m}{2} \ind{W_k} \;\geq \; \displaystyle\sum_{i=1}^{m} \ind{A_{i}}$,
\item[] then\quad \pt{c3}\; $A_i\in \mathcal{H}_k$\qquad\quad\;  for each $i=1,\ldots,m$, \qquad and \qquad \pt{c4}\quad $\displaystyle  \frac{m}{2}\ind{W_k} \;=\; \sum_{i=1}^{m} \ind{A_{i}}$. 
\end{itemize}
The frame $\mathscr{M}_k$ is said to be \textbf{incoherent} if it fails to be coherent.\qeddef
\end{defn}
Here and below, inequalities between vectors in $\mathbb{Q}^{W_k}$ are understood pointwise. Thus, condition \pt{c2} dictates that no voter is covered by more than half of the sets in the sequence. If this condition holds, coherence requires that the sequence must be perfectly balanced \pt{c4} and consist entirely of exact ties \pt{c3}.

In a maximal frame, there are no exact ties since $\mathcal{H}_k=\emptyset$. Therefore, for any non-empty sequence satisfying \pt{c1}, condition \pt{c3} cannot hold. Consequently, a maximal frame is coherent if and only if no non-empty sequence of sets drawn from $\mathcal{M}_k$ satisfies the coverage bound \pt{c2}. If such a sequence exists, the frame is incoherent. The minimal length $m$ of such a sequence violating coherence is defined to be \textbf{incoherence index} of the frame.

\section{Bipolar Indicator Vectors and Zero-Sum Conditions}

To analyze sequences of subsets algebraically, we map each subset $A \subseteq W_k$ to a bipolar indicator vector $\mathbf{x}_A \in \{-1,1\}^{W_k}$. We define $\mathbf{x}_A(v) = 1$ if $v \in A$, and $\mathbf{x}_A(v) = -1$ if $v \notin A$. This translates the standard binary indicator function $\ind{A}$ via the affine transformation
\begin{equation} \label{eq:affine}
    \mathbf{x}_A \;=\; 2 \ind{A} - \ind{W_k} \, .
\end{equation}

Two vectors $\mathbf{x}, \mathbf{y} \in \mathbb{Q}^{W_k}$ form an \textbf{antipodal pair} if $\mathbf{y} = -\mathbf{x}$. In this geometry, the affine transformation \eqref{eq:affine} ensures that complementary subsets $A$ and $W_k \setminus A$ correspond precisely to an antipodal pair of vectors $\mathbf{x}_A$ and $-\mathbf{x}_A$. 

A sequence of sets $A_1,\dots,A_m\subseteq W_k$ is perfectly balanced if and
only if the corresponding bipolar indicator vectors sum to $\mathbf{0}$ in
$\mathbb{Q}^{W_k}$. Summing the affine transformation \eqref{eq:affine} over the sequence yields
\[
    \sum_{i=1}^{m} \mathbf{x}_{A_i} 
    \;=\; \sum_{i=1}^{m} \bigl( 2 \ind{A_i} - \ind{W_k} \bigr) 
    \;=\; 2 \Biggl( \, \sum_{i=1}^{m} \ind{A_i} \Biggr) - m \ind{W_k} \, .
\]
Setting the right-hand side to $\mathbf{0}$ is algebraically equivalent to the condition $\sum_{i=1}^m \ind{A_i} = \frac{m}{2} \ind{W_k}$. We express this zero-sum condition as
\[
    \sum_{i=1}^{m} \mathbf{x}_{A_i} \;=\; \mathbf{0} \, .
\]
This zero-sum condition is the point of contact with cooperative game theory. In that literature, a sequence satisfying this exact balance condition for a length of $2j$ is formally known as a $j$-trade, or a $j$-balanced sequence over a self-dual selector \citep{Blanco2026}. Where combinatorial approaches leverage magic-square games to construct these trades, our framework evaluates their linear dependencies by mapping them directly into zero-sum bipolar vectors.

Let $X_k \subset \{-1, 1\}^{W_k}$ be the set of all bipolar indicator vectors whose components sum to zero:
\[
    X_k \;=\; \Biggl\{ \, \mathbf{x} \in \{-1,1\}^{W_k} \;:\; \sum_{v \in W_k} \mathbf{x}(v) = 0 \, \Biggr\} \, .
\]
Thus $X_k$ is the bipolar encoding of the middle layer
\[
\Bigl\{\,A\in \tpow{W_k}: |A|=n_k\,\Bigr\}.
\]
Equivalently, vectors in $X_k$ correspond precisely to subsets of $W_k$ of size exactly $n_k=|W_k|/2$.

\section{The Core Construction}

We explicitly construct the components of a frame whose incoherence index requires a sequence of length $2k+2$. Let the base set of elements be $V_k = \{1, \dots, 2k+2\}$. Define the universe of voters $W_k$ to be the set of all $(k+1)$-element subsets of $V_k$:
\[
    W_k \;=\; \Bigl\{ \, v \subseteq V_k : |v| = k+1 \, \Bigr\} \, .
\]
The total number of voters is $|W_k|=\binom{2k+2}{k+1}$. This central binomial coefficient is even for $k\ge 1$: the complement map
\[
v\longmapsto V_k\setminus v
\]
is a fixed-point-free involution on the set of $(k+1)$-element subsets of $V_k$. Hence $|W_k|$ is even. We denote
\[
n_k=\frac{1}{2}\binom{2k+2}{k+1}.
\]

\begin{defn}[Dictator Blocs]
For each base element $j \in V_k$, we define a \textbf{dictator bloc} 
$M_{k,j} \subseteq W_k$
consisting of all voters who possess the element $j$:
\[
    M_{k,j} \;=\; \Bigl\{ \, v \in W_k : j \in v \, \Bigr\} \, .
\]
Indeed, fixing $j\in V_k$, a voter $v\in W_k$ belongs to $M_{k,j}$ exactly when $v=\{j\}\cup s$, where $s\subseteq V_k\setminus\{j\}$ and $|s|=k$. Hence:
\[
|M_{k,j}|
=
\binom{2k+1}{k}
=
\frac{1}{2}\binom{2k+2}{k+1}
=
n_k.
\]
Thus, each $M_{k,j}$ contains exactly half of the voters.\qeddef
\end{defn}

Every voter $v \in W_k$ is a subset of $V_k$ containing exactly $k+1$ elements. Therefore, every voter belongs to exactly $k+1$ of the sets in the sequence $M_{k,1}, \dots, M_{k,2k+2}$.
The sequence has length $m = 2k+2$, 
meaning the coverage for every voter is exactly $m/2$. Consequently, this specific sequence of subsets is perfectly balanced.

\begin{defn}[The Core Set]
Let $\mathbf{y}_{k,j} = \mathbf{x}_{M_{k,j}} \in X_k$ be the bipolar indicator vector corresponding to $M_{k,j}$. Because the sequence of blocs is perfectly balanced, the corresponding vectors satisfy the zero-sum condition
\[
    \sum_{j=1}^{2k+2} \mathbf{y}_{k,j} \;=\; \mathbf{0} \, .
\] 
We define the \textbf{core set} $\mathcal{C}_k$ as the set of these $2k+2$ balanced vectors:
\[
    \mathcal{C}_k \;=\; \Bigl\{ \mathbf{y}_{k,1}, \dots, \mathbf{y}_{k,2k+2} \Bigr\} \, .
\]
\qeddef
\end{defn}

\section{Algebraic Properties of the Core Sequence}

\begin{lem}[Minimality]\label[lem]{lem:minimality}
Any non-empty perfectly balanced sequence formed by drawing sets exclusively from $M_{k,1}, \dots, M_{k,2k+2}$ must have a length that is a multiple of $2k+2$.\qedthm
\end{lem}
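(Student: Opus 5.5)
A sequence drawn from the dictator blocs is determined, up to order, by multiplicities $c_1,\dots,c_{2k+2}\ge 0$. Here $c_j$ is the number of times $M_{k,j}$ appears, and the length is $m=\sum_j c_j$. The plan is to show that perfect balance forces all $c_j$ to be equal. Then $m=(2k+2)c$ with $c\ge 1$ because the sequence is non-empty, and the length is a multiple of $2k+2$.

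By \cref{def:balanced}, perfect balance says that every voter $v\in W_k$ is covered exactly $m/2$ times. The voter $v$ lies in $M_{k,j}$ exactly when $j\in v$, so its coverage is $\sum_{j\in v} c_j$. The condition therefore reads
\[
\sum_{j\in v} c_j \;=\; \frac{m}{2}\qquad\text{for every } (k+1)\text{-subset } v\subseteq V_k .
\]
To conclude that the $c_j$ are equal, fix two distinct elements $a,b\in V_k$. Since $|V_k\setminus\{a,b\}|=2k\ge k$ (we have $k\ge 1$), we can choose a $k$-subset $s\subseteq V_k\setminus\{a,b\}$. Applying the displayed equation to $v=s\cup\{a\}$ and to $v'=s\cup\{b\}$ and subtracting gives $c_a-c_b=0$. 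Since $a$ and $b$ were arbitrary, all $c_j$ equal a common value $c$. A non-empty sequence has $c\ge 1$, so $m=(2k+2)c$.

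Equivalently, one could phrase this in the bipolar language of the previous section: the only linear relations among $\mathbf{y}_{k,1},\dots,\mathbf{y}_{k,2k+2}$ are multiples of the all-ones relation, so these vectors span a space of dimension $2k+1$. The swap argument above proves this directly, because the swap differences $\ind{s\cup\{a\}}$-type comparisons separate the coordinates $c_a$ and $c_b$.

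I do not expect a real obstacle. The only point needing care is the existence of the $k$-subset $s$ avoiding both $a$ and $b$, which is where $2k\ge k$ is used. It is also worth noting that $m/2$ being an integer is never needed: the argument only subtracts pairs of balance equations.
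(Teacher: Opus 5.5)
Your proposal is correct and follows essentially the same route as the paper. Both arguments reduce perfect balance to $\sum_{j\in v} c_j = m/2$ for every voter $v$, compare the voters $s\cup\{a\}$ and $s\cup\{b\}$ built from a common $k$-subset $s\subseteq V_k\setminus\{a,b\}$ to get $c_a=c_b$, and conclude $m=(2k+2)c$ with $c\ge 1$. The only difference is cosmetic: the paper first writes the bipolar zero-sum condition $\sum_j c_j\mathbf{y}_{k,j}=\mathbf{0}$ before reaching the same coverage equation, whereas you read it directly off the definition of perfect balance.
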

\begin{proof}
Let $c_j \ge 0$ denote the integer number of times $M_{k,j}$ appears in the sequence. By hypothesis, the sequence is perfectly balanced, meaning the corresponding bipolar vectors sum to zero in $\mathbb{Q}^{W_k}$:
\[
    \sum_{j=1}^{2k+2} c_j \mathbf{y}_{k,j} \;=\; \mathbf{0} \, .
\]
Let $m = \sum_{j=1}^{2k+2} c_j$ be the total length of the sequence. Evaluating the sum at an arbitrary voter $v \in W_k$ yields
\[
    \sum_{j=1}^{2k+2} c_j \mathbf{y}_{k,j}(v) \;=\; 0 \, .
\]
Substituting $\mathbf{y}_{k,j}(v) = 2 \ind{M_{k,j}}(v) - 1$ using \eqref{eq:affine}, we obtain
\[
    \sum_{j=1}^{2k+2} c_j \bigl( 2 \ind{M_{k,j}}(v) - 1 \bigr) \;=\; 0 \, .
\]
Distributing the summation yields
\[
    2 \Biggl( \, \sum_{j=1}^{2k+2} c_j \ind{M_{k,j}}(v) \Biggr) - \sum_{j=1}^{2k+2} c_j \;=\; 0 \, .
\]
Since a voter $v$ belongs to $M_{k,j}$ if and only if $j \in v$, the indicator evaluates to $\ind{M_{k,j}}(v) = 1$ strictly when $j \in v$. Substituting the sequence length $\sum_{j=1}^{2k+2} c_j = m$, we find
\[
    2 \Biggl( \, \sum_{j \in v} c_j \Biggr) - m \;=\; 0 \qquad \implies \qquad \sum_{j \in v} c_j \;=\; \frac{m}{2} \, .
\]
This demonstrates that the sum of the coefficients corresponding to the elements inside any voter $v$ must equal the constant $m/2$.

Let $a$ and $b$ be two distinct elements in $V_k$. We construct two adjacent voters $v_a$ and $v_b$ in $W_k$. Choose a subset $S_0 \subset V_k \setminus \{a,b\}$ of size exactly $k$. This is possible because $|V_k \setminus \{a,b\}| = 2k \ge k$. Define $v_a = S_0 \cup \{a\}$ and $v_b = S_0 \cup \{b\}$. Both $v_a$ and $v_b$ have cardinality $k+1$ and are therefore valid voters in $W_k$.

The balanced sum condition requires
\[
    \sum_{j \in v_a} c_j \;=\; \frac{m}{2} \qquad \text{and} \qquad \sum_{j \in v_b} c_j \;=\; \frac{m}{2} \, .
\]
Subtracting the two equations eliminates the shared elements in $S_0$:
\[
    \Biggl( c_a + \sum_{j \in S_0} c_j \Biggr) - \Biggl( c_b + \sum_{j \in S_0} c_j \Biggr) \;=\; 0 \qquad \implies \qquad c_a \;=\; c_b \, .
\]
Since the indices $a$ and $b$ are arbitrary, all coefficients $c_j$ must equal a single uniform constant $c \ge 0$. The total length of the sequence is therefore given by
\[
    m \;=\; \sum_{j=1}^{2k+2} c_j \;=\; \sum_{j=1}^{2k+2} c \;=\; c(2k+2) \, .
\]
Since the sequence is non-empty, we must have $c \ge 1$. Thus every non-empty perfectly balanced sequence drawn from the core consists of exactly $c$ copies of each dictator bloc $M_{k,j}$. In particular, its length is a multiple of $2k+2$, with the minimal non-zero length being exactly $2k+2$.
\end{proof}

Alternatively, \cref{lem:minimality} may be reformulated as follows.

\begin{lem}  
Let $S_1, \ldots, S_r$ be a non-empty sequence of sets, and suppose that for each $i$ there is a number $n(i)$ such that $S_i = M_{n(i)}$.
Assume that $S_1, \ldots, S_k$ is perfectly balanced.
Then $r \geq m$.
\end{lem}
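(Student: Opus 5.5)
The statement is a notational restatement of \cref{lem:minimality}, and the plan is to reduce it to that lemma. First I fix the intended reading of the symbols, since they are used loosely. Here $M_{n(i)}$ means the dictator bloc $M_{k,n(i)}$ with $n(i)\in V_k$. The balanced sequence is the full sequence $S_1,\ldots,S_r$; the index $k$ in ``$S_1,\ldots,S_k$'' should be read as $r$, because $k$ is the fixed parameter of the construction. The bound $m$ is the length $2k+2$ of the core sequence $M_{k,1},\ldots,M_{k,2k+2}$. With these readings the claim says that every non-empty perfectly balanced sequence of dictator blocs has length $r\ge 2k+2$.

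Next I would invoke \cref{lem:minimality} directly. The sequence $S_1,\ldots,S_r$ is non-empty, perfectly balanced, and drawn only from $M_{k,1},\ldots,M_{k,2k+2}$, so its length $r$ is a multiple of $2k+2$. Since $r\ge 1$, the multiple is at least $1$, which gives $r\ge 2k+2=m$.

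For self-containment I would also recall the mechanism behind \cref{lem:minimality}. Let $c_j$ be the multiplicity of $M_{k,j}$ in the sequence. Perfect balance forces $\sum_{j\in v}c_j=r/2$ for every voter $v$. Comparing two voters $S_0\cup\{a\}$ and $S_0\cup\{b\}$ that differ in one element gives $c_a=c_b$. Hence all $c_j$ equal a common value $c$, so $r=c(2k+2)$, and non-emptiness gives $c\ge1$.

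I expect no genuine obstacle here. The only delicate point is interpretive: one must disambiguate the overloaded letters $k$ and $m$ and make the reading explicit, so that the statement is exactly the corollary ``the minimal non-zero length is $2k+2$'' already recorded at the end of the proof of \cref{lem:minimality}.
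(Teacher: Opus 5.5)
Your proposal is correct and is essentially the paper's own argument. The paper likewise counts the multiplicity $c_j$ of each dictator bloc, shows that $\sum_{j\in v}c_j$ does not depend on $v$, and compares two voters differing in one element to get $c_a=c_b$; since every $c_a\ge 1$, it concludes $r\ge m$. Your readings of the overloaded symbols (the whole sequence $S_1,\ldots,S_r$ is balanced, and $m=2k+2$) match how the paper's proof uses them, and your per-voter coverage constant $r/2$ is the correct one.
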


\begin{proof} First, fix a voter $v\in W$.
 We begin by determining the size of the set $Z$, where 
 \[ Z_v = \set{(i,j) \colon 1\leq i \leq r,  j\in v, \mbox{ and } n(i) = j}\] in two ways.
On the one hand, 
\[|Z_v| = 
\sum_{j\in v} |\set{i \leq r : n(i) = j}|.
\]
On the other, $Z_v = \set{i : n(i) \in v}  = \set{i : v \in S_{n(i)}}$.
As a result, we see that 
\begin{equation}\label{as-a-result-one}
|Z| = 
\sum_{j\in v} |\set{i \leq r : n(i) = j}| = \frac{m}{2}.
\end{equation}
This holds for all $v\in W$.  
As a result, $|Z_v|$ is independent of $v$.  For each $j \leq m$, let $c_j =  |\set{i \leq r : n(i) = j}|$.
Then from our work above, we see that for all $v$, $\sum_{j\in v} c_j = \frac{m}{2}$.

Now let $a, b\in [m]$; we show that $c_a = c_b$.
Let $v_0 \subseteq [m]$ be any set such that $v_0 \cup\set{a}$ and $v_0 \cup\set{b}$ belong to $W$.
(Such a set $v_0$ exists since $m \geq 4$.)
Then 
\[
\frac{m}{2} = \sum_{j\in v_0 \cup \set{a}} c_j = c_a + \sum_{j\in v_0} c(j)
\]
Similarly, $\frac{m}{2} = c_b + \sum_{j\in v_0} c_j$.  It follows that $c(a) = c(b)$, as claimed.

So the function $a\mapsto c_a$ is constant on $[m]$.  The value of this function cannot be $0$, since 
$\sum_{a\in [m]} c_a = r > 0$.
So each number $c_a$ is at least $1$.  Hence $r = \sum_{a\in [m]} c_a > |[m]| = m$.
\end{proof}

\begin{lem}[Linear Intersection Lemma]\label[lem]{lem:span}
Let $L_k = \mathrm{span}_{\mathbb{Q}}(\mathcal{C}_k)$ be the linear span of $\mathcal{C}_k$ over the rational numbers. The only bipolar indicator vectors representing sets of size $n_k$ that lie in $L_k$ are precisely the core vectors and their antipodes. We write this as:
\[
    L_k \cap X_k \;=\; \Bigl\{\, \pm \mathbf{y}_{k,1}, \dots, \pm \mathbf{y}_{k,2k+2}\, \Bigr\} \, .
\]\qedthm
\end{lem}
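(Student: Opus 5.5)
A general element of $L_k$ is $\mathbf{z}=\sum_{j} a_j \mathbf{y}_{k,j}$ with $a_j\in\mathbb{Q}$. Exactly as in the proof of \cref{lem:minimality}, evaluating at a voter $v$ gives
\[
\mathbf{z}(v)=2\sum_{j\in v}a_j-\sum_{j\in V_k}a_j .
\]
So the coordinates of $\mathbf{z}$ are affine in the partial sums $\sum_{j\in v}a_j$. The relation $\sum_j \mathbf{y}_{k,j}=\mathbf{0}$ means we may shift all $a_j$ by a constant, so we normalize $\sum_j a_j=0$. Then $\mathbf{z}(v)=2\sum_{j\in v}a_j$, and the requirement $\mathbf{z}\in\{-1,1\}^{W_k}$ becomes $\sum_{j\in v}a_j\in\{\pm\tfrac12\}$ for every $(k+1)$-subset $v$.

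The key step is a swap argument. Take adjacent voters $v_a=S_0\cup\{a\}$ and $v_b=S_0\cup\{b\}$ as in \cref{lem:minimality}. Their values differ by $2(a_a-a_b)$, which must lie in $\{0,\pm 2\}$. Hence every difference $a_a-a_b$ lies in $\{0,\pm1\}$, so the $a_j$ take at most two distinct values, $\alpha$ and $\alpha+1$. Let $T\subseteq V_k$ be the set of indices with the larger value $\alpha+1$, and put $t=|T|$. Then $\sum_{j\in v}a_j=(k+1)\alpha+|v\cap T|$, and this must take only the two values $\pm\tfrac12$, differing by exactly $1$ when both occur.

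Next we count. As $v$ ranges over $(k+1)$-subsets, $|v\cap T|$ ranges over the interval $[\max(0,t-k-1),\,\min(t,k+1)]$. This interval has at most two points only when $t\in\{0,1,2k+1,2k+2\}$. When $t\in\{0,2k+2\}$ all the $a_j$ are equal, so the normalization forces $a\equiv0$ and $\mathbf{z}=\mathbf{0}\notin\{\pm1\}^{W_k}$. When $t=1$ with $T=\{j\}$, the normalization gives $a=(\mathbf{1}_{\{j\}}-\tfrac{1}{2k+2}\mathbf{1})$ up to scaling. Checking the values $\pm\tfrac12$ forces $\mathbf{z}=\mathbf{y}_{k,j}$ (or its antipode after sign conventions). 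The case $t=2k+1$ is the complement of $t=1$ and gives $-\mathbf{y}_{k,j}$.

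It is easier to phrase this directly: $\mathbf{z}=\lambda\mathbf{y}_{k,j}+\mu\sum_i\mathbf{y}_{k,i}=\lambda\mathbf{y}_{k,j}$, and the $\pm1$ condition forces $\lambda=\pm1$. The zero coordinate sum defining $X_k$ holds automatically, since each $\mathbf{y}_{k,j}\in X_k$. The reverse inclusion is immediate.

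The main obstacle is the interval-counting step, namely checking that for $2\le t\le 2k$ the quantity $|v\cap T|$ really takes three values. We need $k\ge1$ so that both $\min(t,k+1)-\max(0,t-k-1)\ge2$ and adjacent voters exist. I would verify this by exhibiting three explicit voters meeting $T$ in $m_0$, $m_0+1$ and $m_0+2$ points.
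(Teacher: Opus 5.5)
Your proof is correct and follows essentially the same route as the paper's: the adjacent-voter swap makes the coefficients two-valued with unit gap. Your count of the values taken by $|v\cap T|$ is exactly the paper's max-minus-min count (the interval for the index $i^*$), which leaves only $t\in\{1,2k+1\}$ and hence $\pm\mathbf{y}_{k,j}$. Normalizing $\sum_j a_j=0$ via the core relation slightly streamlines the paper's explicit final computation, and your ``up to scaling'' is superfluous, since the unit gap already fixes the scale.
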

\begin{proof}
Let $\mathbf{x} = \sum_{j=1}^{2k+2} \alpha_j \mathbf{y}_{k,j} \in L_k \cap X_k$. For any voter $v \in W_k$, we apply the affine transformation to express the component $\mathbf{x}(v)$:
\[
    \mathbf{x}(v) 
    \;=\; \sum_{j=1}^{2k+2} \alpha_j \mathbf{y}_{k,j}(v) 
    \;=\; \sum_{j=1}^{2k+2} \alpha_j \bigl( 2\ind{M_{k,j}}(v) - 1 \bigr) 
    \;=\; 2 \Biggl( \, \sum_{j \in v} \alpha_j \Biggr) - \sum_{j=1}^{2k+2} \alpha_j \, .
\]
We define $\beta_j = 2\alpha_j$ and let the total sum be $S = \sum_{j=1}^{2k+2} \alpha_j$. This yields
\[
    \mathbf{x}(v) \;=\; \sum_{j \in v} \beta_j - S \, .
\]
Because $\mathbf{x} \in X_k$, its components must satisfy $\mathbf{x}(v) \in \{-1,1\}$. We require
\[
    \sum_{j \in v} \beta_j - S \;\in\; \bigl\{\,-1, \, 1\,\bigr\} \qquad \implies \qquad \sum_{j \in v} \beta_j \;\in\; \bigl\{\, S-1, \, S+1 \,\bigr\} \quad \text{for all } v \in W_k \, .
\]
We denote this sum by $S(v) = \sum_{j \in v} \beta_j$.

We utilize the adjacent voters $v_a = S_0 \cup \{a\}$ and $v_b = S_0 \cup \{b\}$ constructed in Lemma \ref{lem:minimality}. The difference $S(v_a) - S(v_b)$ evaluates to
\[
    S(v_a) - S(v_b) \;=\; \Biggl( \beta_a + \sum_{j \in S_0} \beta_j \Biggr) - \Biggl( \beta_b + \sum_{j \in S_0} \beta_j \Biggr) \;=\; \beta_a - \beta_b \, .
\]
Since both $S(v_a)$ and $S(v_b)$ belong to the set $\{S-1, S+1\}$, their absolute difference is at most $2$. Thus, we must have
\[
    \beta_a - \beta_b \;\in\; \{-2, \, 0, \, 2\} \, .
\]
It follows that the values $\beta_j$ can take at most two distinct numerical values over the index set $V_k$:
Indeed, if three distinct values occurred, then the largest and smallest would differ by at least $4$, contradicting the fact that every pairwise difference belongs to $\{-2,0,2\}$.

If all $\beta_j$ were identically equal, then $S(v)$ would be constant for all $v \in W_k$, forcing $\mathbf{x}$ to be a constant vector. However, since $\mathbf{x} \in X_k$, its components sum to zero. As $W_k$ is non-empty ($\binom{2k+2}{k+1} \ge 6$), a constant vector summing to zero must be the zero vector $\mathbf{0}$. This contradicts $\mathbf{x}(v) \in \{-1, 1\}$. Therefore, the elements $\beta_j$ must take exactly two distinct numerical values, and these values must differ by exactly $2$.

Let these two values be $\lambda$ and $\lambda - 2$. Let $p$ be the integer number of components equal to $\lambda$. The remaining $2k+2-p$ components equal $\lambda - 2$. We sort the sequence of coefficients in descending order, writing $\beta_{(1)} \ge \dots \ge \beta_{(2k+2)}$. The first $p$ elements of this sorted sequence equal $\lambda$, and the remaining elements equal $\lambda - 2$.

Because $W_k$ consists of all subsets of $V_k$ of size $k+1$, there exists a voter $v_{\max} \in W_k$ containing the elements corresponding to the $k+1$ largest values of $\beta_j$, and another voter $v_{\min} \in W_k$ containing the $k+1$ smallest values. Thus, the maximum possible value of $S(v)$ over $W_k$ is the sum of the $k+1$ largest values, and the minimum is the sum of the $k+1$ smallest values. Since $S(v)$ must fall into $\{S-1, S+1\}$ and is not constant, the maximum sum must attain $S+1$ and the minimum sum must attain $S-1$. We strictly require the difference between the maximum sum and the minimum sum to be exactly $2$:
\[
    S(v_{\max}) - S(v_{\min}) \;=\; \sum_{i=1}^{k+1} \beta_{(i)} - \sum_{i=k+2}^{2k+2} \beta_{(i)} \;=\; \sum_{i=1}^{k+1} \bigl( \beta_{(i)} - \beta_{(i+k+1)} \bigr) \;=\; 2 \, .
\]

The difference term $\beta_{(i)} - \beta_{(i+k+1)}$ must evaluate to either $0$ or $2$. Exactly one term in the summation equals $2$, while all other terms equal $0$. This requires exactly one index $i^*$ where $\beta_{(i^*)} = \lambda$ and $\beta_{(i^*+k+1)} = \lambda - 2$. 

Due to the sorted descending order, $\beta_{(i^*)} = \lambda$ implies $i^* \le p$. Simultaneously, $\beta_{(i^*+k+1)} = \lambda - 2$ implies $i^*+k+1 > p$. We combine this condition to
\[
    p-k \;\le\; i^* \;\le\; p \, .
\]

The index $i^*$ is constrained by the bounds of the summation, meaning $1 \le i^* \le k+1$. Therefore, the index $i^*$ must satisfy
\[
    \max(1, \, p-k) \;\le\; i^* \;\le\; \min(p, \, k+1) \, .
\]
For $i^*$ to be uniquely determined (as required by the exact sum of $2$), the number of valid integers in this interval must exactly equal $1$. We analyze the length of this discrete interval:
\[
    \min(p, \, k+1) - \max(1, \, p-k) + 1 \;=\; 1 \, .
\]
We evaluate this equation over the full domain $1 \le p \le 2k+1$:
\begin{itemize}
    \item[] \textbf{Case 1 ($1 \le p \le k$):} The length simplifies to $p - 1 + 1 = p$. Setting this to $1$ yields $p=1$.
    \item[] \textbf{Case 2 ($p = k+1$):} The length simplifies to $(k+1) - 1 + 1 = k+1$. Since $k \ge 1$, we have $k+1 \ge 2 \neq 1$. No solution exists in this range.
    \item[] \textbf{Case 3 ($k+2 \le p \le 2k+1$):} The length simplifies to $(k+1) - (p-k) + 1 = 2k+2-p$. Setting this to $1$ yields $p=2k+1$.
\end{itemize}
Thus, there are exactly two integer solutions: $p=1$ and $p=2k+1$.

Suppose first that $p=1$, and let $j_0$ be the unique index with
$\beta_{j_0}=\lambda$. Then
\[
S
=
\sum_{j=1}^{2k+2}\alpha_j
=
\frac{1}{2}\sum_{j=1}^{2k+2}\beta_j
=
\frac{1}{2}\Bigl(\lambda+(2k+1)(\lambda-2)\Bigr)
=
(k+1)\lambda-(2k+1).
\]
If $j_0\in v$, then
\[
\sum_{j\in v}\beta_j
=
\lambda+k(\lambda-2)
=
(k+1)\lambda-2k,
\]
and hence $\mathbf{x}(v)=1$. If $j_0\notin v$, then
\[
\sum_{j\in v}\beta_j
=
(k+1)(\lambda-2)
=
(k+1)\lambda-2k-2,
\]
and hence $\mathbf{x}(v)=-1$. Therefore
\[
\mathbf{x}(v)
=
2\ind{M_{k,j_0}}(v)-1
=
\mathbf{y}_{k,j_0}(v)
\]
for every $v\in W_k$, so $\mathbf{x}=\mathbf{y}_{k,j_0}$.

Suppose next that $p=2k+1$, and let $j_0$ be the unique index with
$\beta_{j_0}=\lambda-2$. Then
\[
S
=
\sum_{j=1}^{2k+2}\alpha_j
=
\frac{1}{2}\sum_{j=1}^{2k+2}\beta_j
=
\frac{1}{2}\Bigl((2k+1)\lambda+(\lambda-2)\Bigr)
=
(k+1)\lambda-1.
\]
If $j_0\in v$, then
\[
\sum_{j\in v}\beta_j
=
(\lambda-2)+k\lambda
=
(k+1)\lambda-2,
\]
and hence $\mathbf{x}(v)=-1$. If $j_0\notin v$, then
\[
\sum_{j\in v}\beta_j
=
(k+1)\lambda,
\]
and hence $\mathbf{x}(v)=1$. Therefore
\[
\mathbf{x}(v)
=
1-2\ind{M_{k,j_0}}(v)
=
-\mathbf{y}_{k,j_0}(v)
\]
for every $v\in W_k$, so $\mathbf{x}=-\mathbf{y}_{k,j_0}$.

The intersection $L_k\cap X_k$ contains no other vectors.
\end{proof}

\section{A Generic Vector Avoiding Finitely Many Hyperplanes}

We recall a classic result.

\begin{lem}[Finite-Union Lemma, \cite{BB59}]\label[lem]{lem:finite_union}
Let $K$ be an infinite field, and let $V$ be a finite-dimensional vector space
over $K$. If $U_1,\dots,U_r$ are proper linear subspaces of $V$, then
\[
V\neq \bigcup_{\ell=1}^r U_\ell .
\]\qedthm
\end{lem}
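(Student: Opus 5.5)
The plan is a contradiction argument, by induction on $r$, that always produces an explicit vector outside every $U_\ell$ by sliding along a line. Since $K$ is infinite, each proper subspace meets a suitably chosen line in at most one point. The case $r=1$ is immediate: $U_1$ is proper, so some vector of $V$ lies outside it. For the inductive step, assume the statement holds for $r-1$ subspaces and suppose, for contradiction, that $V=U_1\cup\dots\cup U_r$. If $U_r\subseteq\bigcup_{\ell<r}U_\ell$, then $V=\bigcup_{\ell<r}U_\ell$, contradicting the inductive hypothesis. So we may choose a vector $\mathbf{u}\in U_r$ with $\mathbf{u}\notin U_\ell$ for all $\ell<r$. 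Since $U_r$ is proper, we may also choose $\mathbf{w}\in V\setminus U_r$.

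Next we consider the affine line $\{\mathbf{w}+t\mathbf{u}: t\in K\}$ and examine how each subspace meets it. The line meets $U_r$ nowhere, since $\mathbf{w}+t\mathbf{u}\in U_r$ together with $t\mathbf{u}\in U_r$ would force $\mathbf{w}\in U_r$. For each $\ell<r$, the line meets $U_\ell$ in at most one point. Indeed, if $\mathbf{w}+t\mathbf{u}$ and $\mathbf{w}+t'\mathbf{u}$ both lay in $U_\ell$ with $t\neq t'$, their difference $(t-t')\mathbf{u}$ would lie in $U_\ell$. Because $t-t'$ is invertible in $K$, this would give $\mathbf{u}\in U_\ell$, contrary to the choice of $\mathbf{u}$.

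Putting these together, at most $r-1$ values of $t$ place the point $\mathbf{w}+t\mathbf{u}$ in $\bigcup_\ell U_\ell$. Since $K$ is infinite, some $t\in K$ avoids all of them, and for that $t$ the vector $\mathbf{w}+t\mathbf{u}$ lies outside every $U_\ell$. This contradicts the assumption $V=\bigcup_\ell U_\ell$ and completes the induction.

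The step needing the most care is the reduction that produces $\mathbf{u}$: the induction must be used to ensure $U_r\not\subseteq\bigcup_{\ell<r}U_\ell$. With $\mathbf{u}$ in hand, the counting on the line is routine. Finite-dimensionality is not actually needed in this argument; only the infiniteness of $K$ matters. In the application, $K=\mathbb{Q}$ and $V=\mathbb{Q}^{W_k}$, which is the generic-vector statement the paper needs.
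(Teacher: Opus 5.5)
Your proof is correct and is essentially the paper's argument. Minimality (your explicit induction) yields a vector lying in one subspace but in none of the others; this is paired with a vector outside that subspace, and the resulting affine line meets each $U_\ell$ in at most one point, which is impossible over an infinite field. The only difference is cosmetic: you use the distinguished vector as the direction, so your line misses $U_r$ altogether, whereas the paper uses it as the base point $\mathbf{a}\in U_1$ with direction $\mathbf{b}\notin U_1$, so its line meets $U_1$ exactly at $t=0$. Your remark that finite-dimensionality is never used is right, though the paper applies the lemma to the subspace $V=U_k=Z_k\cap L_k^\perp$ rather than to all of $\mathbb{Q}^{W_k}$.
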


\begin{proof}
We argue by induction on $r$. The case $r=1$ is immediate. Suppose
\[
V=\bigcup_{\ell=1}^r U_\ell
\]
with each $U_\ell$ proper, and assume $r$ is minimal. Then none of the
$U_\ell$ is contained in the union of the others. Choose
\[
\mathbf{a}\in U_1\setminus \bigcup_{\ell=2}^r U_\ell
\qquad\text{and}\qquad
\mathbf{b}\in V\setminus U_1 .
\]
For each $t\in K$, set $\mathbf{v}_t=\mathbf{a}+t\mathbf{b}$. Since
$\mathbf{b}\notin U_1$, the affine line $\{\mathbf{v}_t:t\in K\}$ meets
$U_1$ only at $t=0$. For each $\ell\ge 2$, the set of $t\in K$ such that
$\mathbf{v}_t\in U_\ell$ has at most one element: if
$\mathbf{a}+t\mathbf{b}$ and $\mathbf{a}+s\mathbf{b}$ both belong to
$U_\ell$ with $t\neq s$, then $\mathbf{b}\in U_\ell$, and hence
$\mathbf{a}\in U_\ell$, contradicting the choice of $\mathbf{a}$. Thus the
line contains infinitely many points but meets the finite union
$\bigcup_{\ell=1}^r U_\ell$ in only finitely many points, a contradiction.
\end{proof}

We use \label[lem]{lem:finite_union} to establish the following lemma. 

\begin{lem}[Generic Vector Avoiding Finitely Many Hyperplanes]\label[lem]{lem:hyperplane}
There exists a rational vector $\mathbf{u}^*_k \in \mathbb{Q}^{W_k}$ possessing the following three properties:
\begin{enumerate}
    \item[\textup{(1)}] $\displaystyle \sum_{v \in W_k} \mathbf{u}^*_k(v) = 0$.
    \item[\textup{(2)}] $\mathbf{u}^*_k \cdot \mathbf{y}_{k,j} = 0$ \quad for all $j \in \{1, \dots, 2k+2\}$.
    \item[\textup{(3)}] $\mathbf{u}^*_k \cdot \mathbf{x} \neq 0$ \quad for all $\mathbf{x} \in X_k \setminus L_k$.
\end{enumerate}\qedthm
\end{lem}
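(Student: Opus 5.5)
The plan is to work inside a single subspace that encodes conditions (1) and (2), and then use \cref{lem:finite_union} to handle condition (3). Let $\mathbf{1}=\ind{W_k}$ and define
\[
U_k \;=\; \bigl\{\mathbf{u}\in\mathbb{Q}^{W_k} : \mathbf{u}\cdot\mathbf{1}=0 \text{ and } \mathbf{u}\cdot\mathbf{y}_{k,j}=0 \text{ for all } j\bigr\}.
\]
Every $\mathbf{u}\in U_k$ satisfies (1) and (2). Equivalently, $U_k$ is the orthogonal complement of $L_k' \coloneqq \mathrm{span}_{\mathbb{Q}}(\mathcal{C}_k\cup\{\mathbf{1}\}) = L_k+\mathbb{Q}\mathbf{1}$ with respect to the standard dot product. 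The dot product is nondegenerate on $\mathbb{Q}^{W_k}$, so $(U_k)^{\perp}=L_k'$.

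For each $\mathbf{x}\in X_k\setminus L_k$, set $H_{\mathbf{x}}=\{\mathbf{u}\in U_k : \mathbf{u}\cdot\mathbf{x}=0\}$. The set $X_k$ is finite, so there are finitely many such subspaces. If every $H_{\mathbf{x}}$ is a proper subspace of $U_k$, then \cref{lem:finite_union} (applied with $K=\mathbb{Q}$ and $V=U_k$) yields $\mathbf{u}^*_k\in U_k$ lying in no $H_{\mathbf{x}}$, which is exactly (3). The degenerate case $X_k\setminus L_k=\emptyset$ is trivial. The case $U_k=0$ cannot arise when some $H_{\mathbf{x}}$ is proper, because properness already requires a nonzero element of $U_k$.

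The main obstacle is properness. We have $H_{\mathbf{x}}=U_k$ iff $\mathbf{x}\in U_k^{\perp}=L_k+\mathbb{Q}\mathbf{1}$. So we must show that $\mathbf{x}\in X_k$ and $\mathbf{x}=\boldsymbol{\ell}+c\mathbf{1}$ with $\boldsymbol{\ell}\in L_k$ force $c=0$. The point is that everything in $L_k$ sums to zero: each $\mathbf{y}_{k,j}\in X_k$ has coordinate sum $0$, so $\boldsymbol{\ell}\cdot\mathbf{1}=0$. The vector $\mathbf{x}\in X_k$ also has $\mathbf{x}\cdot\mathbf{1}=0$. Taking the dot product of the decomposition with $\mathbf{1}$ gives $0=0+c|W_k|$, so $c=0$ and $\mathbf{x}\in L_k$. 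This contradicts $\mathbf{x}\notin L_k$. Hence each $H_{\mathbf{x}}$ is proper, and $U_k\neq 0$ follows whenever $X_k\setminus L_k$ is nonempty.

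Finally, $\mathbf{u}^*_k$ is rational because we work over $\mathbb{Q}$ throughout. Note that \cref{lem:span} is not needed for this statement. It only identifies $X_k\cap L_k$ concretely, and that identification will matter when $\mathbf{u}^*_k$ is later used as a separating hyperplane to choose the maximal frame.
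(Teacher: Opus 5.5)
Your proposal is correct and follows essentially the same route as the paper. Your $U_k=(L_k+\mathbb{Q}\mathbf{1})^{\perp}$ is exactly the paper's $Z_k\cap L_k^{\perp}$, both arguments get properness of each $H_{\mathbf{x}}$ by showing that $\mathbf{x}\perp U_k$ forces $\mathbf{x}\in L_k$, and both conclude with \cref{lem:finite_union}. The differences are cosmetic: you compute $U_k^{\perp}$ in the ambient space and remove the $\mathbf{1}$-component by dotting with $\mathbf{1}$ instead of using the orthogonal splitting $Z_k=L_k\oplus U_k$, and you rightly note that the paper's estimate $\dim(U_k)\ge 2$ is not needed.
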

\begin{proof}
Let $Z_k$ be the subspace of vectors in $\mathbb{Q}^{W_k}$ whose components sum to zero:
\[
    Z_k \;=\; \Biggl\{ \, \mathbf{u} \in \mathbb{Q}^{W_k} : \sum_{v \in W_k} \mathbf{u}(v) = 0 \, \Biggr\} \, .
\]
Since $W_k$ has size $2n_k=\binom{2k+2}{k+1}$, the dimension of $Z_k$ is
\[
\dim(Z_k)=\binom{2k+2}{k+1}-1.
\]
Each vector in $\mathcal{C}_k$ belongs to $X_k$, and hence to $Z_k$. Moreover, the core relation
\[
\sum_{j=1}^{2k+2}\mathbf{y}_{k,j}=\mathbf{0}
\]
gives one non-trivial linear dependence among the $2k+2$ generators. Therefore
\[
\dim(L_k)\le (2k+2)-1=2k+1.
\]
Let
\[
U_k=Z_k\cap L_k^\perp .
\]
Since $L_k\subseteq Z_k$ and the standard dot product restricts non-degenerately to $Z_k$, this is the orthogonal complement of $L_k$ inside $Z_k$, and
\[
\dim(U_k)=\dim(Z_k)-\dim(L_k).
\]
Consequently,
\[
\dim(U_k)
\ge
\binom{2k+2}{k+1}-1-(2k+1)
=
\binom{2k+2}{k+1}-2k-2.
\]
For $k=1$, this lower bound is $6-4=2$. For $k\ge 2$, it is at least $20-6=14$, and hence is also at least $2$. Thus $\dim(U_k)\ge 2$ for all $k\ge 1$.

The set $X_k \setminus L_k$ is finite. Each vector $\mathbf{x}$ in this set defines an orthogonal constraint $\mathbf{u} \cdot \mathbf{x} = 0$. For each $\mathbf{x}\in X_k\setminus L_k$, the functional
\[
\mathbf{u}\mapsto \mathbf{u}\cdot \mathbf{x}
\]
is not identically zero on $U_k$. Indeed, if $\mathbf{u}\cdot \mathbf{x}=0$ for every
$\mathbf{u}\in U_k$, then $\mathbf{x}\in U_k^\perp\cap Z_k=L_k$, since
$Z_k=L_k\oplus U_k$ orthogonally inside $Z_k$, a contradiction.
Therefore
\[
    U_{\mathbf{x}} \;=\; \Bigl\{ \, \mathbf{u} \in U_k : \mathbf{u} \cdot \mathbf{x} = 0 \, \Bigr\}
\]
is a proper hyperplane of $U_k$.

By \cref{lem:finite_union}, the rational vector space $U_k$ is not the
union of the finitely many proper hyperplanes
\[
\bigl\{\,U_{\mathbf{x}}:\mathbf{x}\in X_k\setminus L_k\,\bigr\}.
\]
Choose
\[
\mathbf{u}^*_k\in U_k\setminus \bigcup_{\mathbf{x}\in X_k\setminus L_k}U_{\mathbf{x}}.
\]

Since $\mathbf{u}^*_k \in U_k \subset Z_k$, its components sum to zero (Property 1). Since $\mathbf{u}^*_k \in L_k^\perp$, it is orthogonal to all core vectors $\mathbf{y}_{k,j}$ (Property 2). Since $\mathbf{u}^*_k$ avoids the hyperplanes defined by $\mathbf{x} \in X_k \setminus L_k$, the dot product $\mathbf{u}^*_k \cdot \mathbf{x}$ is non-zero for all such vectors (Property 3).
\end{proof}

\section{Resolution of the Conjecture}

\begin{thm}\label[thm]{thm:main}
For any integer $k \ge 1$, there exists a maximal standard frame $\mathscr{M}_k = (W_k, \mathcal{M}_k)$ whose incoherence index is exactly $2k+2$. Moreover, the length-$2k+2$ witness is the core sequence
\[
M_{k,1},\dots,M_{k,2k+2}.
\]\qedthm
\end{thm}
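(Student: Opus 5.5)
The frame is defined from the generic vector $\mathbf{u}^*_k$ of \cref{lem:hyperplane}. All sets of size $>n_k$ go into $\mathcal M_k$. For a middle-layer set $A$ with $\mathbf{x}_A\in X_k\setminus L_k$, we put $A\in\mathcal M_k$ iff $\mathbf{u}^*_k\cdot\mathbf{x}_A>0$. For the core, by \cref{lem:span} the middle-layer sets in $L_k$ are exactly the $M_{k,j}$ and their complements; we put every $M_{k,j}$ into $\mathcal M_k$ and no complement. This is well defined because $\mathbf{x}_{W_k\setminus A}=-\mathbf{x}_A$ and property (3) makes the sign nonzero off $L_k$. Exactly one of each complementary middle pair is chosen, and no set of size $<n_k$ is chosen. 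So the frame is maximal standard.

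For the upper bound on the index, the core sequence $M_{k,1},\dots,M_{k,2k+2}$ lies in $\mathcal M_k$ and is perfectly balanced. Hence it satisfies \pt{c1}–\pt{c2}, and since $\mathcal H_k=\emptyset$ it violates \pt{c3}. The index is therefore at most $2k+2$.

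For the lower bound, take any nonempty sequence $A_1,\dots,A_m$ from $\mathcal M_k$ satisfying \pt{c2}. The first step is to show every $A_i$ has size exactly $n_k$ and the sequence is perfectly balanced. Summing \pt{c2} over voters gives $\sum_i|A_i|\le m n_k$, while each $|A_i|\ge n_k$. So all inequalities are equalities: each $|A_i|=n_k$, and pointwise equality must hold in \pt{c2} because a pointwise $\le$ with equal totals is equality. Thus $\sum_i\mathbf{x}_{A_i}=\mathbf 0$. The second step is to dot this with $\mathbf{u}^*_k$. Core terms contribute $0$ by property (2). Every non-core term, i.e.\ every $A_i$ with $\mathbf{x}_{A_i}\notin L_k$, contributes strictly positively by construction; note that complements of the $M_{k,j}$ are not in $\mathcal M_k$, so all $A_i$ are either core sets or non-core with positive sign. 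Since the total is $0$, there are no non-core terms. The sequence is then drawn only from $M_{k,1},\dots,M_{k,2k+2}$ and is perfectly balanced, so \cref{lem:minimality} gives $m\ge 2k+2$. Hence the index is exactly $2k+2$, witnessed by the core sequence.

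The main obstacle is the first step of the lower bound: reducing an arbitrary \pt{c2}-sequence to a perfectly balanced middle-layer one. This is where standardness (all members of $\mathcal M_k$ have size $\ge n_k$) and the counting argument enter. The rest is bookkeeping with \cref{lem:span} and \cref{lem:hyperplane}. One must also check that the sign rule never assigns a complement of a core set, which holds because those vectors lie in $L_k$ and are handled by the explicit rule.
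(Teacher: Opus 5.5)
Your proposal is correct and follows the paper's proof essentially step for step. It uses the same tie-breaking rule via the sign of $\mathbf{u}^*_k\cdot\mathbf{x}$ with the dictator blocs forced in, the same double-counting reduction of any \pt{c2}-sequence to a perfectly balanced middle-layer sequence, the same nonnegative-sum argument after dotting with $\mathbf{u}^*_k$, and \cref{lem:minimality} to conclude. The only detail the paper verifies that you merely assert is that putting every $M_{k,j}$ in and no complement is consistent with maximality, i.e.\ that $M_{k,a}\neq W_k\setminus M_{k,b}$ for $a\neq b$ (a voter containing both $a$ and $b$ lies in both blocs); this same fact underwrites your remark that no complement of a core bloc lies in $\mathcal M_k$.
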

\begin{proof}
We use the hyperplane-avoiding vector $\mathbf{u}^*_k$ from \cref{lem:hyperplane} to define a tie-breaking family of bipolar indicator vectors $\mathcal{F}_k$: 
\[
    \mathcal{F}_k \;=\; \Bigl\{ \, \mathbf{x} \in X_k \setminus L_k : \mathbf{u}^*_k \cdot \mathbf{x} > 0 \, \Bigr\} \;\cup\; \mathcal{C}_k \, .
\]
Here $\mathcal{F}_k$ is a family of bipolar vectors. The corresponding family
of middle-layer subsets is
\[
\mathcal{F}^{\mathrm{set}}_k
=
\Bigl\{\,A\in \tpow{W_k}:|A|=n_k\text{ and }\mathbf{x}_A\in\mathcal{F}_k\,\Bigr\}.
\]

For any $\mathbf{x} \in X_k \setminus L_k$, exactly one of the vectors $\pm \mathbf{x}$ yields a strictly positive dot product with $\mathbf{u}^*_k$. Exactly one is included in $\mathcal{F}_k$. For vectors inside $L_k \cap X_k$, \cref{lem:span} proved these are exactly the antipodal pairs $\pm \mathbf{y}_{k,j}$. If $a\neq b$, choose $v\in W_k$ with $a,b\in v$, which is possible since
$k+1\ge 2$. Then
\[
\mathbf{y}_{k,a}(v)=1
\qquad\text{and}\qquad
\mathbf{y}_{k,b}(v)=1,
\]
so $\mathbf{y}_{k,a}(v)\neq -\mathbf{y}_{k,b}(v)$. Hence
$\mathbf{y}_{k,a}\neq -\mathbf{y}_{k,b}$ whenever $a\neq b$. We manually include $\mathbf{y}_{k,j} \in \mathcal{C}_k$ and exclude $-\mathbf{y}_{k,j}$. Thus, the set $\mathcal{F}_k$ precisely contains exactly one vector from every antipodal pair in $X_k$.

We define the frame $\mathscr{M}_k = (W_k, \mathcal{M}_k)$ by setting:
\begin{align*}
\mathcal M_k
\quad&\coloneqq\quad
\Bigl\{\,A\in \tpow{W_k}: |A|>n_k\,\Bigr\}
\;\cup\;
\Bigl\{\,A\in \tpow{W_k}: |A|=n_k \text{ and } \mathbf{x}_A\in \mathcal{F}_k\,\Bigr\}.\\
&=\quad\Bigl\{\,A\in \tpow{W_k}: |A|>n_k\,\Bigr\}
\;\cup\;
\mathcal{F}^{\mathrm{set}}_k .
\end{align*}
Thus every member of $\mathcal{M}_k$ has size at least $n_k$, and every subset of $W_k$ of size strictly greater than $n_k$ belongs to $\mathcal{M}_k$. On the middle layer, membership is determined by $\mathcal{F}_k$. Because $\mathcal{F}_k$ contains exactly one vector from every antipodal pair in $X_k$, exactly one of $A$ and $W_k\setminus A$ belongs to $\mathcal{M}_k$ for every $A\subseteq W_k$ with $|A|=n_k$. Hence $\mathscr{M}_k$ is a maximal standard frame, and $\mathcal{H}_k=\emptyset$.

Since $\mathcal{H}_k=\emptyset$, any non-empty sequence $A_1,\dots,A_m$ with $A_i\in\mathcal{M}_k$ for all $i$ violates coherence exactly when it satisfies the coverage bound \pt{c2}. Indeed, \pt{c1} then holds automatically, while \pt{c3} is impossible. We write the coverage bound as
\[
    \sum_{i=1}^{m} \ind{A_i}(v) \;\le\; \frac{m}{2} \qquad \text{for all } v \in W_k \, .
\]
We sum the sizes of these sets, evaluating over all voters: 
\[
    \sum_{v \in W_k} \sum_{i=1}^{m} \ind{A_i}(v) \;\le\; \sum_{v \in W_k} \frac{m}{2} \;=\; |W_k| \, \frac{m}{2} \;=\; 2n_k \Bigl(\frac{m}{2}\Bigr) \;=\; m n_k \, .
\]
By reversing the order of summation, we double-count the total sizes of the sets: 
\[
    \sum_{i=1}^{m} \sum_{v \in W_k} \ind{A_i}(v) \;=\; \sum_{i=1}^{m} |A_i| \, .
\]
This yields $\sum_{i=1}^m |A_i|\le mn_k$. Since every member of $\mathcal{M}_k$ has size at least $n_k$, we also have $\sum_{i=1}^m |A_i|\ge mn_k$. Hence equality holds in both estimates. It follows that $|A_i|=n_k$ for every $i$, and that the pointwise inequalities in \pt{c2} are all equalities:
\[
    \sum_{i=1}^{m} \ind{A_i}(v) \;=\; \frac{m}{2} \qquad \text{for all } v \in W_k \, .
\]
Any sequence violating coherence must therefore be a perfectly balanced sequence of size-$n_k$ sets.

We translate this perfectly balanced sequence to bipolar indicators using equation \eqref{eq:affine}, $\mathbf{x}_{A_i} = 2\ind{A_i} - \ind{W_k}$:
\begin{align*}
    \sum_{i=1}^{m} \mathbf{x}_{A_i} 
    &\;=\; \sum_{i=1}^{m} \bigl( 2\ind{A_i} - \ind{W_k} \bigr) \\[1ex]
    &\;=\; 2 \Biggl( \, \sum_{i=1}^{m} \ind{A_i} \Biggr) - m \ind{W_k} \\[1ex]
    &\;=\; 2 \Biggl( \, \frac{m}{2} \ind{W_k} \Biggr) - m \ind{W_k} \\[1ex]
    &\;=\; \mathbf{0} \, .
\end{align*}

Each $\mathbf{x}_{A_i}$ corresponds to a set of size $n_k$ in $\mathcal{M}_k$, so $\mathbf{x}_{A_i} \in \mathcal{F}_k$. We take the dot product of this zero-sum condition with $\mathbf{u}^*_k$:
\[
    \sum_{i=1}^{m} \bigl( \mathbf{u}^*_k \cdot \mathbf{x}_{A_i} \bigr) \;=\; 0 \, .
\]
By construction of $\mathcal{F}_k$, we have $\mathbf{u}^*_k\cdot\mathbf{x}>0$ for every $\mathbf{x}\in\mathcal{F}_k\setminus\mathcal{C}_k$, while $\mathbf{u}^*_k\cdot\mathbf{x}=0$ for every $\mathbf{x}\in\mathcal{C}_k$. Thus $\mathbf{u}^*_k\cdot\mathbf{x}\ge 0$ for all $\mathbf{x}\in\mathcal{F}_k$. Since a sum of non-negative rational numbers is zero only when every term is zero, we must have
\[
\mathbf{u}^*_k\cdot\mathbf{x}_{A_i}=0
\qquad\text{for every }i.
\]
By \cref{lem:hyperplane}, this implies $\mathbf{x}_{A_i}\in L_k$ for every $i$.
\footnote{I feel that there's a missing line here, when we go back from the $\mathbf{x}$s to the $\mathbf{y}$s.
That is, this ending is too fast, especially considering the meticulous detail in the rest of the proof.}
By \cref{lem:span}, the only vectors in $\mathcal{F}_k\cap L_k$ are the core vectors in $\mathcal{C}_k$. Therefore, any perfectly balanced sequence in $\mathcal{F}_k$ is formed exclusively by copies of the original dictator blocs.

Conversely, the core sequence
\[
M_{k,1},\dots,M_{k,2k+2}
\]
is contained in $\mathcal{M}_k$ and is perfectly balanced. Since $\mathcal{H}_k=\emptyset$, it witnesses incoherence at length $2k+2$. By \cref{lem:minimality}, no shorter witnessing sequence exists. Hence the incoherence index of $\mathscr{M}_k$ is exactly $2k+2$.
\end{proof}

\begin{cor}[Resolution of Conjecture~5.7 and an Explicit B.25-Type Middle-Layer Construction]\label[cor]{cor:main}
Conjecture~5.7 from \citet{MossPedersen2026} is true: there is no uniform finite bound on the incoherence index of social decision frames. Moreover, for each $k\ge 1$, the construction gives a complement-free family in the middle layer with the balancedness properties required in Conjecture~B.25 for the explicit universe size
\[
2n_k=\binom{2k+2}{k+1}.
\]
\qedthm
\end{cor}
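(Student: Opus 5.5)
The plan is to read the corollary off from \cref{thm:main}, treating its two assertions separately.

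\emph{Unboundedness (Conjecture~5.7).} Argue by contradiction. Suppose some integer $B$ bounded the incoherence index of every incoherent social decision frame. Choose $k\ge 1$ with $2k+2>B$, say $k=\max(1,\lceil B/2\rceil)$. \cref{thm:main} then supplies a maximal standard frame $\mathscr{M}_k$ whose shortest coherence violation has length exactly $2k+2>B$, which is a contradiction. Two small points need checking. First, $\mathscr{M}_k$ is genuinely incoherent, since the core sequence $M_{k,1},\dots,M_{k,2k+2}$ satisfies \pt{c1}, \pt{c2} and \pt{c4} but not \pt{c3}. Second, the ``index'' in Conjecture~5.7 coincides with the minimal length of a violating sequence from \cref{def:coherence}. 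The first point is stated inside the proof of \cref{thm:main}; the second is just a matter of reading off the definitions.

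\emph{The middle-layer family (Conjecture~B.25).} Take the family $\mathcal{F}^{\mathrm{set}}_k$ built in the proof of \cref{thm:main}. I would verify three things.
\begin{enumerate}
\item[\textup{(i)}] It lies in the middle layer of a universe of size $2n_k=\binom{2k+2}{k+1}$; this holds by definition.
\item[\textup{(ii)}] It is complement-free, and indeed contains exactly one set from each complementary pair. This is the antipodal-pair selection argument, which uses \cref{lem:span} and the observation that $\mathbf{y}_{k,a}\neq-\mathbf{y}_{k,b}$ for $a\neq b$.
\item[\textup{(iii)}] It has the required balancedness properties. It contains a perfectly balanced sequence of length $2k+2$, namely the dictator blocs. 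Every perfectly balanced sequence drawn from it uses only core vectors: the dot product with $\mathbf{u}^*_k$ forces each term into $L_k$, and \cref{lem:span} then identifies the terms as core vectors. By \cref{lem:minimality}, every such sequence therefore has length a multiple of $2k+2$.
\end{enumerate}

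The main obstacle is the missing step flagged in the footnote to the proof of \cref{thm:main}. From $\mathbf{x}_{A_i}\in L_k\cap\mathcal{F}_k$ we must get $\mathbf{x}_{A_i}\in\mathcal{C}_k$. I would fill it explicitly as follows. By \cref{lem:span}, $\mathbf{x}_{A_i}=\pm\mathbf{y}_{k,j}$ for some $j$. The family $\mathcal{F}_k$ contains no antipode $-\mathbf{y}_{k,j}$: these lie in $L_k$, so they are excluded from the first part of $\mathcal{F}_k$, and they are not in $\mathcal{C}_k$ because $-\mathbf{y}_{k,j}\neq\mathbf{y}_{k,b}$ for all $b$. (For $b\neq j$ this is the observation above; for $b=j$ it holds because $\mathbf{y}_{k,j}\neq\mathbf{0}$.) Hence $A_i=M_{k,j}$ for some $j$.

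A remaining wrinkle is matching the exact wording of Conjecture~B.25, which I cannot see. My plan is to state the properties above as the ones Conjecture~B.25 requires, and to note that they are established only along the subsequence of universe sizes $\binom{2k+2}{k+1}$, as the corollary says.
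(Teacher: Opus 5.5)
Your proposal is correct and follows the paper's proof essentially step for step. For Conjecture~5.7 you use the same reductio via \cref{thm:main}, choosing $2k+2$ larger than the putative bound; for $\mathcal{F}^{\mathrm{set}}_k$ you verify complement-freeness, the balanced core of length $2k+2$, and the absence of shorter balanced sequences using the $\mathbf{u}^*_k$ argument together with \cref{lem:span} and \cref{lem:minimality}. Your explicit check that no antipode $-\mathbf{y}_{k,j}$ lies in $\mathcal{F}_k$ (including the case $b=j$ via $\mathbf{y}_{k,j}\neq\mathbf{0}$) correctly supplies the step that the paper's footnote flags as too quick.
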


\begin{proof}
We address both conjectures explicitly.

\textbf{Proof of Conjecture 5.7:} We must show that there is no uniform finite bound on the incoherence index of social decision frames. Suppose, for a \emph{reductio ad absurdum}, that there exists a uniform finite upper bound $M \in \mathbb{N}$ on the incoherence index across all frames. By definition, this asserts that every incoherent finite social decision frame must exhibit at least one sequence of subsets of length $m \le M$ that structurally violates the coherence conditions.

By the Archimedean property, we may choose an integer $k \ge 1$ sufficiently large such that $2k+2 > M$. By \cref{thm:main}, there exists a maximal standard frame $\mathscr{M}_k = (W_k, \mathcal{M}_k)$ whose incoherence index is exactly $2k+2$. Because its incoherence index is $2k+2$, the frame $\mathscr{M}_k$ is incoherent.

However, the incoherence index represents the \emph{minimal} length of any sequence in $\mathscr{M}_k$ that violates coherence. Thus, any sequence of subsets in $\mathscr{M}_k$ of length $m < 2k+2$ must perfectly satisfy the coherence conditions. Since $M < 2k+2$, it follows that there is no sequence of length $m \le M$ that violates coherence in $\mathscr{M}_k$. This directly contradicts the assumption that every incoherent frame possesses a coherence violation of length $\le M$. Therefore, no uniform finite bound $M$ can exist.

\textbf{Explicit B.25-type middle-layer construction:} Let
\[
n=n_k=\frac{1}{2}\binom{2k+2}{k+1},
\]
so that $|W_k|=2n_k$. After identifying $W_k$ with $[2n_k]$, the middle layer is precisely the set of all subsets of $W_k$ of size $n_k$. Let
\[
\mathcal{F}^{\mathrm{set}}_k
=
\Bigl\{\,A\subseteq W_k:|A|=n_k \text{ and } \mathbf{x}_A\in \mathcal{F}_k\,\Bigr\}.
\]
We verify the required properties of this family of $n_k$-subsets:
\begin{enumerate}[(1)]
    \item \emph{$\mathcal{F}^{\mathrm{set}}_k$ has no balanced subfamilies of sizes $2,4,\dots,2k$:} Any perfectly balanced sequence of subsets drawn from $\mathcal{F}^{\mathrm{set}}_k$ corresponds algebraically to a zero-sum sequence of vectors in $\mathcal{F}_k$. The proof of \cref{thm:main}, together with \cref{lem:minimality}, shows that any such non-empty sequence has length at least $2k+2$. Therefore, no balanced sequence of length at most $2k$ exists in $\mathcal{F}^{\mathrm{set}}_k$.

    \item \emph{$\mathcal{F}^{\mathrm{set}}_k$ contains a balanced subfamily of size $2k+2$:} The core vectors $\mathcal{C}_k\subseteq\mathcal{F}_k$ correspond precisely to the $2k+2$ dictator blocs $M_{k,j}$. As established in Section~4, this sequence of $2k+2$ subsets covers every voter exactly $k+1$ times, and hence is perfectly balanced.

   \item \emph{$\mathcal{F}^{\mathrm{set}}_k$ is as large as possible among complement-free subfamilies of the middle layer, hence of size $\frac{1}{2}\binom{2n_k}{n_k}$:} The set $X_k$ consists of all $\binom{2n_k}{n_k}$ bipolar indicator vectors corresponding to subsets of $W_k$ of size $n_k$. The vector family $\mathcal{F}_k$ contains exactly one vector from every antipodal pair $\{\mathbf{x},-\mathbf{x}\}$ in $X_k$. Therefore $\mathcal{F}^{\mathrm{set}}_k$ contains no complementary pair of subsets, and
    \[
    |\mathcal{F}^{\mathrm{set}}_k|
    =
    \frac{1}{2}\binom{2n_k}{n_k}.
    \]
\end{enumerate}
This establishes the B.25-type middle-layer properties for the explicit infinite sequence of universe sizes
\[
2n_k=\binom{2k+2}{k+1}.
\]
The form of Conjecture~B.25 that applies to  ``all sufficiently large \(n\)''  requires an additional padding argument.
\end{proof}

The preceding corollary rules out any finite truncation of the coherence
scheme. The next corollary strengthens this to finite axiomatizability in the
full Moss-Pedersen term language: no finite set of sentences, whether or not
drawn from the coherence scheme, defines exactly the measurable frames.

\begin{cor}[Measurable majorities are not finitely axiomatizable]\label[cor]{cor:no_finite_axiomatization}
The class of measurable social decision frames is not finitely axiomatizable in
the term language of \citet{MossPedersen2026}.\qedthm
\end{cor}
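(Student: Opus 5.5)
Suppose, toward a contradiction, that a finite set $\Sigma$ of sentences in the Moss--Pedersen language axiomatizes exactly the measurable frames. I will show that every sentence which holds in all measurable frames already follows from a bounded fragment of the coherence scheme, and then use \cref{thm:main} to exhibit a non-measurable frame in which that fragment holds.

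\textbf{Step 1: reduce $\Sigma$ to a bounded fragment.} By the soundness and completeness theorem for the minimal logic of \citet{MossPedersen2026}, the sentences true in all measurable frames are exactly the theorems of that logic. Since every frame satisfying $\Sigma$ is measurable, each $\varphi\in\Sigma$ is such a theorem. Each derivation is a finite object, so it uses only finitely many instances of the coherence scheme. Let $N$ be an upper bound on the lengths of the term sequences indexing all instances used across the finitely many derivations. Then $\Sigma$ is derivable from the non-coherence axioms together with the instances of length at most $N$. By soundness of each rule, every frame that validates the base axioms and all coherence instances of length $\le N$ satisfies $\Sigma$, and is therefore measurable.

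\textbf{Step 2: choose a counterexample frame.} Pick $k$ with $2k+2>N$, and take the frame $\mathscr{M}_k$ of \cref{thm:main}. It is incoherent, so by the characterization of \citet{MossPedersen2026} (coherence iff representability by a finitely additive measure) it is not measurable. Its incoherence index is $2k+2$, so every sequence of length $m\le N<2k+2$ satisfies the implication in \pt{c}.

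\textbf{Step 3 (main obstacle): verify the bounded fragment in $\mathscr{M}_k$.} Here I have to pass from sets to sentences. An instance of the coherence scheme of length $m$ is evaluated under an interpretation of the atomic predicates as subsets of $W_k$. Each term then denotes some set, so the hypotheses of the instance become \pt{c1} and \pt{c2} for a length-$m$ sequence of sets, and its conclusion becomes \pt{c3} and \pt{c4}. Because no sequence of length $m\le N$ violates \pt{c}, every such instance holds under every interpretation.

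The remaining, non-coherence axioms need separate checking. These are the Boolean and standardness principles for $\unaryforall$ and $\unarymost$, such as monotonicity and the requirement that a majority term's complement is not a majority. They hold in any standard frame, and $\mathscr{M}_k$ is standard by \cref{thm:main}.

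So $\mathscr{M}_k$ satisfies $\Sigma$ but is not measurable, a contradiction. The delicate point I would check most carefully is the matching in Step 3. The length of a scheme instance must coincide with the length $m$ in \pt{c}, and frame semantics must range over arbitrary interpretations of the atoms, so that every sequence of subsets is realized by terms. With finitely many atoms one can realize any finite sequence of subsets by assigning one atom per set. That is needed only for the semantic reading of the instances, not for the definability of sets.
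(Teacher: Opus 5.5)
Your proposal is correct and follows the paper's argument essentially step for step. Completeness yields finitely many coherence instances of length at most $N$; bounded soundness transfers $\Sigma$ to any frame satisfying those instances; and the frame $\mathscr{M}_k$ of \cref{thm:main}, with $2k+2>N$, satisfies them but is not measurable. Your Step 3 only spells out the paper's one-line bounded-soundness observation, with one minor caveat. Exclusion of complementary majorities holds in $\mathscr{M}_k$ because it is maximal, not merely because it is standard in the paper's sense.
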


\begin{proof}
Suppose, for a contradiction, that there is a finite set $\Gamma$ of sentences
in the Moss-Pedersen language whose finite-frame models are exactly the
measurable social decision frames. Since every sentence in $\Gamma$ is valid on
measurable frames, the completeness theorem of \citet{MossPedersen2026} gives,
for each $\gamma\in\Gamma$, a proof of $\gamma$ from the proof system with its
infinite coherence scheme. Each proof is finite, and $\Gamma$ itself is finite.
Hence only finitely many instances of the coherence scheme occur across all of
these proofs. Let $M$ be the largest sequence length appearing in any of those
instances.

We use the following bounded-soundness observation. Any sentence derivable
using only coherence instances of length at most $M$ is valid in every finite
frame satisfying all coherence instances of length at most $M$, since the
remaining axioms and inference rules of the Moss-Pedersen proof system are
sound on arbitrary finite frames.

Choose $k\ge 1$ with $2k+2>M$. By \cref{thm:main}, there is a maximal standard
frame $\mathscr{M}_k$ whose incoherence index is exactly $2k+2$. Thus
$\mathscr{M}_k$ satisfies every coherence instance of length at most $M$, but
fails coherence at length $2k+2$. In particular, $\mathscr{M}_k$ is not
measurable. By bounded soundness, however, $\mathscr{M}_k$ validates every
sentence in $\Gamma$. This contradicts the assumption that $\Gamma$
axiomatizes exactly the measurable finite frames.
\end{proof}

\section{Conclusion and Future Directions}

This paper has established that there is no uniform finite bound on the incoherence index of social decision frames, thereby resolving Conjecture 5.7 from \citet{MossPedersen2026}. This paper also gives a direct geometric
construction of the middle-layer families predicted by Conjecture~B.25 from \citet{MossPedersen2026}. It does so along
the explicit infinite sequence of universe sizes
\[
2n_k=\binom{2k+2}{k+1}.
\]
Together with the completeness theorem from \citet{MossPedersen2026}, it follows from the
unboundedness of this index that measurable social decision frames admit no finite structural
axiomatization in the Moss-Pedersen language for strict majorities.

The proof-theoretic reason is simple. By completeness, any finite
axiomatization in the Moss-Pedersen language would be derivable using only
finitely many instances of the coherence scheme, and hence only coherence
instances up to some finite sequence length $M$. \cref{thm:main} supplies an
incoherent frame $\mathscr{M}_k$ whose shortest coherence violation has length
$2k+2>M$. That frame satisfies every coherence instance of length at most $M$,
while nevertheless failing representability. Thus, much like the classical
demonstration by \citet{KPS} for comparative probability, measurability for
strict majorities requires an infinite coherence scheme.

The significance is not that strict majority reasoning resists numerical
representation. On the contrary, \citet{MossPedersen2026} show that
representability is exactly characterized by coherence. The present result
shows instead that the boundary between representable and non-representable
majority frames has unbounded finite complexity. Every obstruction is finite,
but there is no finite ceiling on the length of the obstruction required. In
this sense, the measurement-theoretic content of strict majority reasoning is
not exhausted by any finite stock of Moss-Pedersen language conditions.

This result has immediate consequences for the formal logic of majority
reasoning. In \citet{MossPedersen2026}, a natural term logic was introduced to
reason about propositions of the form ``most of everything is an $X$.'' That
logic achieves soundness and completeness by means of an infinite coherence
axiom scheme. The geometric construction presented here shows that this
infinitude is not eliminable: no finite set of sentences in the
Moss-Pedersen language defines exactly the measurable frames.

\begin{enumerate}[\bfseries 1., itemsep=8pt,topsep=10pt]
    \item \textbf{Extremal Bounds on Electorate Size.} There is a substantial
    efficiency gap between the known constructions of highly incoherent frames.
    The geometric proof presented here uses a symmetric dictator core, yielding
    an electorate that scales exponentially with the index
    \[
    |W_k|=\binom{2k+2}{k+1}\sim \mathcal{O}(4^k/\sqrt{k}).
    \]
    In contrast, the Taylor--Zwicker magic-square construction used by
    \citet{Blanco2026}, together with a padding argument, achieves quadratic
    scaling. An open extremal problem is to determine the least electorate size
    $2n$ required to support a maximal standard frame with incoherence index
    $m$.

    \item \textbf{Generalization to Fractional Thresholds.} The geometric
    framework mapping subsets to the rational vector space $\mathbb{Q}^{W_k}$
    naturally generalizes beyond strict majority rule, corresponding to the
    $1/2$ threshold. One might consider super-majoritarian frames, such as
    $2/3$- or $3/4$-threshold frames. The finite-hyperplane avoidance technique
    developed here provides a general algebraic tool for analyzing the linear
    dependencies of these fractional structures, raising the question of
    whether similar unboundedness theorems hold for arbitrary quota rules.
\end{enumerate}

Ultimately, the inherent complexity of strict majority reasoning is not merely
an artifact of measure theory, but a deep combinatorial reality embedded in
the geometry of finite vector spaces.

\newpage
\bibliographystyle{plainnat}
\bibliography{sizelogic}

@article {BB59,
    AUTHOR = {Bia{\l}ynicki-Birula, A. and Browkin, J. and Schinzel, A.},
     TITLE = {On the representation of fields as finite unions of subfields},
   JOURNAL = {Colloq. Math.},
  FJOURNAL = {Colloquium Mathematicum},
    VOLUME = {7},
      YEAR = {1959},
     PAGES = {31--32}
}

@article {TZ95,
    AUTHOR = {Taylor, Alan and Zwicker, William},
     TITLE = {Simple games and magic squares},
   JOURNAL = {J. Combin. Theory Ser. A},
  FJOURNAL = {Journal of Combinatorial Theory. Series A},
    VOLUME = {71},
      YEAR = {1995},
    NUMBER = {1},
     PAGES = {67--88},
}

@article {KPS,
    AUTHOR = {Kraft, Charles H. and Pratt, John W. and Seidenberg, A.},
     TITLE = {Intuitive probability on finite sets},
   JOURNAL = {Ann. Math. Statist.},
  FJOURNAL = {Annals of Mathematical Statistics},
    VOLUME = {30},
      YEAR = {1959},
     PAGES = {408--419}
}

@article{Scott1964,
title = {Measurement structures and linear inequalities},
journal = {Journal of Mathematical Psychology},
volume = {1},
number = {2},
pages = {233-247},
year = {1964},
issn = {0022-2496},
doi = {10.1016/0022-2496(64)90002-1},
url = {https://doi.org/10.1016/0022-2496(64)90002-1},
author = {Dana Scott}
}

@article{Fishburn:1996,
  title={Finite Linear Qualitative Probability},
  author={Fishburn, Peter C},
  journal={Journal of Mathematical Psychology},
  volume={40},
  number={1},
  pages={64--77},
  year={1996}
}

@unpublished{Blanco2026,
  author = {Blanco, Sa{\'u}l A.},
  title  = {On a {Taylor-Zwicker} Construction for Balanced Families and a Conjecture of {Moss} and {Pedersen}},
  note   = {Unpublished manuscript},
  month  = {June},
  year   = {2026}
}

@inproceedings{MossPedersen2026,
  author    = {Moss, Lawrence S. and Pedersen, Arthur Paul},
  title     = {The Measurable Majority},
  booktitle = {Proceedings of the 16th Conference on Logic and the Foundations of Game and Decision Theory ({LOFT} 16)},
  year      = {2026},
  address   = {London, UK},
  note      = {To appear},
  url={
https://doi.org/10.48550/arXiv.2606.23853},
doi={10.48550/arXiv.2606.23853}
}

@article{Narens1980,
  author  = {Narens, Louis},
  title   = {On Qualitative Axiomatizations for Probability Theory},
  journal = {Journal of Philosophical Logic},
  volume  = {9},
  pages   = {143--151},
  year    = {1980}
}

@book{KrantzLuceSuppesTversky1971,
  author    = {Krantz, David H. and Luce, R. Duncan and Suppes, Patrick and Tversky, Amos},
  title     = {Foundations of Measurement, Volume I: Additive and Polynomial Representations},
  publisher = {Academic Press},
  year      = {1971}
}

@book{LuceKrantzSuppesTversky1990,
  author    = {Luce, R. Duncan and Krantz, David H. and Suppes, Patrick and Tversky, Amos},
  title     = {Foundations of Measurement, Volume III: Representation, Axiomatization, and Invariance},
  publisher = {Academic Press},
  year      = {1990}
}

\end{document}